\newcommand{\be}{\begin{equation}}
\newcommand{\ee}{\end{equation}}
\newcommand{\coreset}{\textsc{Coreset}}
\newcommand{\eps}{\ensuremath{\varepsilon}}                       
\renewcommand{\epsilon}{\varepsilon}
\newcommand{\REAL}{\ensuremath{\mathbb{R}}}                       
\newcommand{\SamplesNeeded}{\BigO_{\epsilon,\delta}\left(\log n (d\log \log n + \log^2 n) \right)}
\newcommand\tab[1][0.25cm]{\hspace*{#1}}
\newcommand\Reals{\mathbb{R}}
\renewcommand\PP{\mathcal{P}}
\renewcommand\SS{\mathcal{S}}
\newcommand\BigO{\mathcal{O}}
\newtheorem{theorem}{Theorem}
 \newtheorem{corollary}[theorem]{Corollary}
 \newtheorem{lemma}[theorem]{Lemma}
\newtheorem{definition}[theorem]{Definition}
\newtheorem{assumption}[theorem]{Assumption}
\title{Training Support Vector Machines using Coresets}
\author{Cenk Baykal\\
MIT\\
baykal@mit.edu\and Lucas Liebenwein\\
MIT\\
lucasl@mit.edu\and  Wilko Schwarting\\
MIT\\
wilkos@mit.edu}
\begin{document}
\date{}
\maketitle

\begin{abstract}
\textbf{Note: This work was done as a course project as part of an ongoing research effort that was recently submitted \cite{anonymous2018small}. The submission, done in collaboration with Murad Tukan, Dan Feldman, and Daniela Rus \cite{anonymous2018small}, supersedes the work in this manuscript.}





We present a novel coreset construction algorithm for solving classification tasks using Support Vector Machines (SVMs) in a computationally efficient manner. A coreset is a weighted subset of the original data points that provably approximates the original set. We show that coresets of size $\SamplesNeeded$, i.e., polylogarithmic in $n$ and polynomial in $d$, exist for a set of $n$ input points with $d$ features and present an $(\epsilon,\delta)$-FPRAS for constructing coresets for scalable SVM training. Our method leverages the insight that data points are often redundant and uses an importance sampling scheme based on the sensitivity of each data point to construct coresets efficiently. We evaluate the performance of our algorithm in accelerating SVM training against real-world data sets and compare our algorithm to state-of-the-art coreset approaches. Our empirical results show that our approach outperforms a state-of-the-art coreset approach and uniform sampling in enabling computational speedups while achieving low approximation error.
\end{abstract}

\section{Introduction}
\label{sec:Introduction}
Popular machine learning algorithms are computationally expensive, or worse yet, intractable to train on Big Data. Recently, the notion of using \textit{coresets} \cite{agarwal2005geometric,feldman2011unified,braverman2016new}, small weighted subsets of the input points that approximately represent the original data set, has shown promise in accelerating machine learning algorithms, such as $k$-means clustering \cite{feldman2011unified}, training mixture models \cite{feldman2011scalable}, and logistic regression \cite{huggins2016coresets}. Support Vector Machines (SVMs) are one of the most popular algorithms for classification and regression analysis. However, with the rising availability of Big Data, training SVMs on massive data sets has shown to be computationally expensive. In this paper, we present a coreset construction algorithm for efficiently training Support Vector Machines.

Our approach entails a randomized coreset construction that is based on the insight that data is often redundant and that some input points are more important than others for large-margin classification. Using importance sampling, our algorithm can be considered an $(\epsilon,\delta)$-FPRAS which generates a coreset that could be used for training instead of the original (massive) set of input points, but yet still provide an $\epsilon$-approximation to the ground-truth classifier if all the points were used instead, with probability at least $1 - \delta$. In this paper, we prove that such coresets of size $\BigO_{\epsilon,\delta}\left(\log n (d\log \log n + \log^2 n)) \right)$ can be efficiently constructed for a set of $n$ points with $d$ features and present an intuitive, importance sampling-based approach for constructing them. 

\section{Related Work}
\label{sec:RelatedWork}
Training a canonical Support Vector Machine (SVM) requires $\BigO(n^3)$ time and $\BigO(n^2)$ space \cite{tsang2005core} where $n$ is the number of training points, which may be impractical for certain applications. Work by Tsang et al.~\cite{tsang2005core} investigated computationally-efficient approximations in terms of coresets to the SVM problem, termed Core Vector Machines (CVMs), and leveraged existing coreset methods for the Minimum Enclosing Ball (MEB) \cite{agarwal2005geometric,badoiu2003smaller}. The authors propose a method that reduces the training time required for the two-class L2-SVM to $\mathcal{O}(n)$ and the space to an expression that is (surprisingly) independent of $n$. Similar geometric approaches based on convex hulls and extreme points were investigated by \cite{nandan2014fast}.


Since the SVM problem is inherently a quadratic optimization problem, prior work has investigated approximations to the quadratic programming problem using the Frank-Wolfe algorithm or Gilbert's algorithm \cite{clarkson2010coresets}. Another line of research has been in reducing the problem of polytope distance to solve the SVM problem \cite{gartner2009coresets}. The authors establish lower and upper bounds for the polytope distance problem and use Gilbert's algorithm to train an SVM in linear time. 

Har-Peled et al.\ constructed coresets to approximate the maximum margin separation, i.e., a hyperplane that separates all of the input data with margin larger than $(1-\epsilon)\rho^*$, where $\rho^*$ is the best achievable margin \cite{har2007maximum}. They study the running time of a simple coreset algorithm for binary and ``structured-output'' classification and the use of coresets for active learning and noise-tolerant learning in the agnostic setting.

There have been probabilistic approaches to the SVM problem. Most notable are the works of Clarkson et al.\ \cite{clarkson2012sublinear} and Hazan et al \cite{hazan2011beating}. Hazan et al. used a primal-dual approach combined with the Stochastic Gradient Descent (SGD) approach in order to learn linear SVMs in sublinear time. They propose the SVM-SIMBA approach which returns an $\epsilon$-approximate solution with probability at least $1/2$ to the SVM problem that uses Hinge loss as the objective function. The key idea in their method is to access single features of the training vectors rather than the entire vectors themselves. However, their method is nondeterministic and returns the correct $\epsilon$-approximation only with some probability greater than a constant ($1/2$).
Clarkson et al.~\cite{clarkson2012sublinear} present sublinear-time (in the size of the input) approximation algorithms for some optimization problems such as training linear classifiers (e.g., perceptron) and finding MEB. They introduce a technique that is  originally applied to the perceptron algorithm, but extend it to the related problems of MEB and SVM in the hard margin or $L2$-SVM formulations. The drawback of their method is that the approximation can be successfully computed only with high probability. Pegasos~\cite{shalev2007pegasos} employs primal estimated Subgradient Descent independent of the input size.

Joachims presents an alternative approach to training SVMs in linear time based on the cutting plane method that hinges on an alternative formulation of the SVM optimization problem \cite{joachims2006training}. He shows that the Cutting-Plane algorithm can be leveraged to train SVMs in $\BigO(sn)$ time for classification and $\BigO(sn \log n)$ time for ordinal regression where $s$ is the average number of non-zero features. However, this approach does not trivially extend to SVMs with kernels and is not sublinear with respect to the number of points $n$.

\section{Problem Definition}
\label{sec:problem-definition}
Given a set of training points $\PP = \{(x_1, y_1) \ldots, (x_n, y_n) \}$ where $x_i \in \PP \subseteq \Reals^d$ and $y_i \in \{-1,1\}$ for all $i \in [n]$, the soft-margin two-class L2-SVM problem is the following quadratic program:
\begin{align}
\min_{w \in Q(\PP)} &\tab f(\PP, w),
\end{align}
where 
\begin{equation}
\label{eqn:svm}
f(\PP, w) =  ||w||_2^2 + C \sum_{i \in [n]} \max\{0, 1 - y_iw^Tx_i\},
\end{equation}
for some regularization parameter $C \in \REAL_+$ and query space $Q(\PP)$, which is defined to be the set of all candidate margins. We note that we ignore the bias term $b$ in the problem formulation for simplicity, since this term can always be embedded into the feature space by expanding the dimension to $d+1$.

Instead of establishing new algorithms for the SVM problem, we focus on reducing the size of the training data by sampling \emph{informative points}, while providing each point a proper weight. 

\subsection{Soft-margin SVM}
\begin{figure}[h!]
 \centering
  \includegraphics[width=0.5\textwidth]{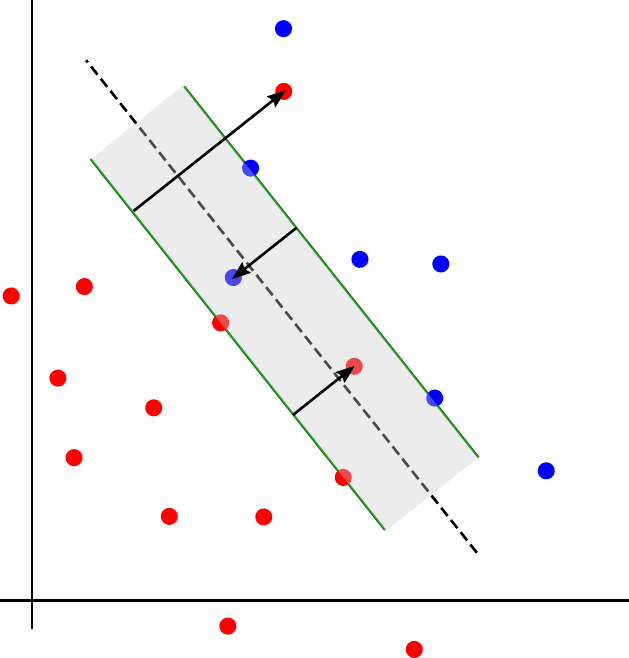}
 \caption{Soft-margin SVM}
 \label{fig:softmargin}
 \end{figure}
If the training data are linearly separable, we can select two parallel hyperplanes that separate the two classes of data, so that the distance between them $\rho = 2/||w||$ is as large as possible. The region bounded by these two hyperplanes is called the margin.
To extend to cases in which the data are not linearly separable a hinge loss function is introduced that penalizes the violation of the margin constraints. This approach is known as soft-margin SVM. Each given data point $x_i$ falls in one of three categories. It either lies beyond the margin $y_i w^Tx_i>1$, in which case it does not contribute to the SVM loss \eqref{eqn:svm}. The point could lie directly on the margin $y_i w^Tx_i=1$, where the point is a support vector and directly affects the cost function but does not directly add to it. Note that the distance between the separating hyperplane and the points on the margin is exactly $1/||w||$. This distance is commonly referred to as \emph{margin} as well. If the point lies within the margin $y_i w^Tx_i<1$ it adds a cost to \eqref{eqn:svm} proportional to the amount of constraint violation.

The regularization parameter $C$ weights the relative importance of maximizing the margin and margin constraint satisfaction for each data point $x_i$ and is used to increase robustness against outliers. Accordingly, if C is very small, margin constraint violation is only penalized weakly, $||w||$ is small and the safety margin around the decision boundary will be large. Contrary, if C is very large, violation of the margin constraint is penalized heavily and the formulation approaches the hard-margin SVM case which is sensitive to outliers in the training set.

\subsection{Coresets}
Instead of introducing an entirely new algorithm for solving the SVM problem itself, we use (smally) subsets of the input data instead for training, i.e., coresets, which 
Using coresets, the main benefit is that we can reduce the runtime through reduction of the number of training data points, while maintaining a close approximation to the optimal solution of the problem. The following definitions for coresets are used throughout the paper: 

\begin{definition}[Query Space]
\label{querySpace}
Let $\PP$ be the set of input points, let $Q(\PP)$ denote the set of possible margins over which the SVM optimization is performed over, and let $f$ be the SVM objective function given by \eqref{eqn:svm}. Then, the tuple $\mathcal{F} = (\PP, Q(\PP), f)$ is called the \emph{query space}.
\end{definition}

\begin{definition}[$\eps$-coreset] 
\label{coreset}
Let $(\SS,u)$ be a weighted subset of the input points $\PP$ such that the function $u: \PP \to \mathbb{R}_{\ge 0}$ maps each point to its corresponding weight. The pair $(\SS,u)$ is called a \emph{coreset} with respect to the input points $\PP$.
Let $Q(\PP)$ be the \emph{query set}, i.e., the set of candidate margins, and let $f$ be the SVM objective function given by \eqref{eqn:svm}. Then $(\SS,u)$ is an $\eps$-coreset if for every margin $w \in Q(\PP)$ we have
$$
|f(\PP,w)-f((\SS,u), w)|\leq \eps f(\PP,w).
$$
\end{definition}
In other words, $(\SS,u)$ is a $\eps$-coreset if $f((\SS,u), w)$ is an $1 \pm \epsilon$ approximation to the objective function value with all of the training points used, $f(\PP, w)$. Our overarching goal is to construct an $\eps$-coreset $\SS \subset \PP$ such that the cardinality of $\SS$ is sublinear in $n$, the number of data points. In our analysis, we will also rely on the concept of sensitivity $s(p_i)$ of a point $p_i$, see definition below, which  has been previously introduced in~\cite{braverman2016new}:

\begin{definition}[Sensitivity]
\label{def:sensitivity}
The sensitivity of an arbitrary point $p_i = (x_i,y_i)$ is defined as
\begin{equation}
\forall{i \in [n]} \tab s(p_i) = \sup_{w \in Q(\PP)} \,\, \frac{\tilde{f}(p_i, w)}{\sum_{j \in [n]} \tilde{f}(p_j, w)},
\end{equation}
where 
$$
\tilde{f}(p_i, w) = \frac{||w||_2^2}{n} + C \max\{0, 1-y_i w^T x_i\}.
$$
\end{definition}

Note that $\tilde{f}(p_i, w)$ represents the \textit{contribution} of point $p_i$ to the objective function of the SVM and that
\begin{align*}
\sum_{j \in [n]} \tilde{f}(p_j, w) = ||w||_2^2 + C\sum_{j \in [n]} \max\{0, 1 - w^T x_j y_j\}  = f(\PP, w)
\end{align*}
where $f(\PP, w)$ is the objective function of the two class L2-norm SVM as in \eqref{eqn:svm}.

\section{Analysis}
\label{sec:Analysis}

In this section, we prove under mild assumptions that an $\epsilon$-coreset of size $\BigO_{\epsilon,\delta}\left(\log n (d\log \log n + \log^2 n) \right)$ can be constructed with probability at least $1-\delta$, in $\BigO(nd)$ time. At a high level, our algorithm first efficiently approximates the importance of each point $p_i \in \PP$, which we refer to as the point's \emph{sensitivity} $s(p_i)$. The number of sample points required for an $(\epsilon, \delta)$-approximation is then computed as a function of the points' sensitivities using an analogue of the Estimator Theorem covered in class (and by Motwani et al. \cite{motwani2010randomized}), i.e., Theorem~\ref{thm:sampling} by Braverman et al. \cite{braverman2016new}. 


The outline of our proof is as follows. We begin by enumerating the preliminary material as well as the assumptions we impose on the problem. We then bound the sensitivity of each point by computing a tight upper bound that can be efficiently computed for all points. We then sum over all the upper bounds for the sensitivities of the points and show this sum is logarithmic in the number of points $n$. We then invoke Theorem~\ref{thm:sampling} with the computed sum of sensitivities and a straightforward application of the theorem's expression for the number of points required yields the existence of an $\epsilon$-coreset polylogarithmic in the number of points $n$ and polynomial in the number of features $d$. Combining these procedures, we finally arrive at the $(\epsilon, \delta)$-FPRAS, shown as Alg.~\ref{algorithm}.



\subsection{Preliminaries}
\label{sec:Preliminaries}
In the following, we state some assumptions and results upon which, we base our analysis.
\begin{assumption}[Normalized Input]
\label{asm:normalizedInput}
The training data is normalized such that for any $p_i = (x_i, y_i)$, $i \in [n]$ we have $||x_i||_2 \leq 1$.
\end{assumption}
\begin{assumption}[Scaled Input]
\label{asm:centeredInput}
The training data is centered around its mean $\mu = \frac{1}{n} \sum_{i=1}^n x_i$, such that $\mu = 0$.
\end{assumption}
Assumptions~\ref{asm:normalizedInput} and ~\ref{asm:centeredInput} are very commonly fulfilled in practical settings since both normalization and mean-centering of the input points are desirable before the training procedure for more robust results.
\begin{assumption}[Bounded Query Space]
\label{asm:boundedQuerySpace}
Let $\mathcal{W} = \{w \in \mathbb{R}^d: ||w||_2 \leq \log n\}$. The query set $Q(\PP)$ is then defined to be the set
\begin{equation}
Q(\PP) = \big\{w \in \mathcal{W} : \sum_{i \in [n]} \max \{0, 1 - w^T x_i y_i\} \geq n/\log n\big\}.
\end{equation}
\end{assumption}
In other words, we consider the set of candidate margins that do not entirely separate the labeled data, as is usually the case in target coresets applications that involve an extremely large number of data points\footnote{In future work, we intend to relax this assumption in our sensitivity analysis by leveraging a probabilistic argument in conjunction with the fact that data points are centered.}.

Assumption~\ref{asm:boundedQuerySpace} of having a bounded query space is justified by the fact that the points lie within a unit ball and therefore margins in accordant scale are reasonable. Moreover, we note that in many coreset applications a bounded space is a necessary condition for having coresets of sublinear size as shown for the case of Logistic Regression \cite{huggins2016coresets}.

Our analysis further relies on Theorem~\ref{thm:sampling} given by Braverman et al.~\cite{braverman2016new}, which is stated below. This result states that for any given overapproximation $\gamma(p_i)$ of the sensitivity $s(p_i)$, a coreset of sufficiently large size, where the size depends on the tightness of the overapproximation, gives an $\eps$-coreset with probability at least $1-\delta$. This theorem, together with our subsequent analysis, will allow us to establish the aforementioned $(\eps,\delta)$-FPRAS for computing the margin of a SVM classifier.

\begin{theorem}[Braverman et al. \cite{braverman2016new}]
\label{thm:sampling}
Let $\gamma: P \to \Reals_+$ be a function such that
$$
\forall{i \in [n]} \tab \gamma(p_i) \ge s(p_i),
$$
and let $t = \sum_{i \in [n]} \gamma(p_i)$. Further, let $\mathcal{F} = \left(\PP, Q(\PP), f \right)$ denote the query space and let $\text{dim}(\mathcal{F})$ be the corresponding VC dimension \cite{vapnik1998statistical}. Then, for all $\epsilon \in (0, 1)$, there exists some sufficiently large constant $c \ge 1$ such that for a random sample $\SS \subset \PP$ of size 
\begin{equation}
\label{eqn:samplesNeeded}
|\SS| \ge \frac{ct}{\epsilon^2}\left(dim(\mathcal{F})\log(t) + t\log(\frac{1}{\delta})\right)
\end{equation}
we have that $p = q$ with probability $s(p)/t$ for every $p \in \PP$ and $q \in S$. Let 
\begin{equation}
u(p) = \frac{K_p t (p)}{s(p) |\SS|}
\end{equation} be the weight for every $p \in \SS$, where $K_p$ is the number of times point $p$ is sampled. Then, with probability at least $1 - \delta$, $(S,u)$ is an $\epsilon$-coreset for $\PP$.  
\end{theorem}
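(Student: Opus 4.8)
The plan is to prove the statement through the Feldman--Langberg importance-sampling paradigm (in the form used by \cite{braverman2016new}): first build an unbiased estimator of $f(\PP,w)$ out of the random multiset $\SS$, and then upgrade this pointwise guarantee to a bound that holds uniformly over all queries $w \in Q(\PP)$ by means of a Vapnik--Chervonenkis-type uniform convergence theorem. Concretely, draw $m = |\SS|$ points $q_1,\dots,q_m$ independently with $\Pr[q_j = p] = \gamma(p)/t$ for each $p \in \PP$ (the sampling distribution implicit in the statement), let $K_p = |\{\, j : q_j = p \,\}|$, and set $u(p) = t\,K_p \,/\, (\gamma(p)\, m)$. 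Writing $f((\SS,u),w) = \sum_{p \in \SS} u(p)\,\tilde f(p,w)$ in accordance with Definition~\ref{coreset}, a direct computation gives $\mathbb{E}\big[ f((\SS,u),w) \big] = \tfrac{t}{m} \sum_{j=1}^m \mathbb{E}\big[ \tilde f(q_j,w)/\gamma(q_j) \big] = \sum_{i \in [n]} \tilde f(p_i,w) = f(\PP,w)$, so the estimator is unbiased for every fixed $w$; moreover the hypothesis $\gamma(p) \ge s(p)$ forces $\tilde f(p,w)/\gamma(p) \le \sum_{i} \tilde f(p_i,w) = f(\PP,w)$, which is exactly the boundedness needed to control the fluctuations of the estimator.

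Next I would normalize. If $f(\PP,w) = 0$ the coreset inequality is trivial, so fix $w$ with $f(\PP,w) > 0$ and define $g_w : \PP \to [0,1]$ by $g_w(p) = \tilde f(p,w) \,/\, \big(\gamma(p)\, f(\PP,w)\big)$; the boundedness just noted shows the codomain $[0,1]$ is legitimate. Under the sampling distribution, $\mathbb{E}\big[ g_w(q_j) \big] = 1/t$, and the coreset requirement $\big| f((\SS,u),w) - f(\PP,w) \big| \le \epsilon\, f(\PP,w)$ is literally equivalent to $\big| \tfrac{1}{m}\sum_{j=1}^m g_w(q_j) - \tfrac{1}{t} \big| \le \tfrac{\epsilon}{t}$. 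It therefore suffices to show that, with probability at least $1-\delta$, the empirical mean of every function in the class $\mathcal{G} = \{\, g_w : w \in Q(\PP) \,\}$ is within relative error $\epsilon$ of its true mean $1/t$, simultaneously over all $w$.

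This is where the dimension of the query space enters: the sub-level sets of the functions in $\mathcal{G}$ are governed by the same (affine, hyperplane-induced) ranges that define $\mathrm{dim}(\mathcal{F})$, so the range space associated to $\mathcal{G}$ has VC/pseudo-dimension $\BigO(\mathrm{dim}(\mathcal{F}))$, which we absorb into the constant $c$. I would then invoke a relative $(\epsilon,\nu)$-approximation theorem for function classes of bounded range (the Li--Long--Srinivasan / Har-Peled--Sharir strengthening of the classical $\epsilon$-sample bound): a sample of size $\BigO\!\big( \tfrac{1}{\epsilon^2 \nu}( \mathrm{dim}(\mathcal{F}) \log \tfrac{1}{\nu} + \log \tfrac{1}{\delta} ) \big)$ guarantees additive deviation at most $\epsilon \max\{\mu_w, \nu\}$ on the mean $\mu_w$ of every function in the class. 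Taking $\nu = 1/t$, which is exactly the common value of $\mu_w$, turns this into a sample-size requirement of the form $|\SS| \ge \tfrac{c\,t}{\epsilon^2}\big( \mathrm{dim}(\mathcal{F}) \log t + t \log \tfrac{1}{\delta} \big)$ as in \eqref{eqn:samplesNeeded}, and under that bound $(\SS,u)$ is an $\epsilon$-coreset with probability at least $1-\delta$.

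The step I expect to be the crux is the relative-error estimate, not the unbiasedness: since the target mean $1/t$ is small relative to the $[0,1]$ scale of the functions $g_w$, a plain Hoeffding bound combined with a union bound over an $\epsilon$-net of $Q(\PP)$ would not yield the stated dependence on $t$, so one genuinely needs the relative-approximation version of uniform convergence, and the careful part is identifying the right range space and certifying that its dimension does not exceed $\BigO(\mathrm{dim}(\mathcal{F}))$. A second point requiring care is that the sample is drawn with replacement, so the object under analysis is the multiset $\SS$ with multiplicities $K_p$ (hence the $K_p$ in the weights) rather than an ordinary subset of $\PP$.
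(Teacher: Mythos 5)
This theorem is imported from Braverman et al.\ and the paper supplies no proof of it, so there is nothing internal to compare against; judged on its own terms, your sketch is the standard Feldman--Langberg-style argument that underlies the cited result, and its skeleton is sound: unbiasedness of the importance-weighted estimator, the observation that $\gamma(p)\ge s(p)$ makes $g_w(p)=\tilde{f}(p,w)/(\gamma(p)f(\PP,w))$ land in $[0,1]$ with common mean $1/t$, and a relative $(\eps,\nu)$-approximation bound with $\nu=1/t$ over a range space of dimension $\BigO(\mathrm{dim}(\mathcal{F}))$. You correctly identify the crux (the relative-error uniform convergence step, not unbiasedness) and you also silently repair two defects in the statement as printed, which is worth making explicit: the sampling probability and weight should involve $\gamma(p)$ and $t$, not $s(p)$ and ``$t(p)$'' (the sensitivities themselves are not computable, which is the whole point of introducing $\gamma$). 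One remaining wrinkle you should not gloss over: your unbiasedness computation uses $u(p)=tK_p/(\gamma(p)\,m)$ with $m$ the number of draws, whereas both the theorem statement and Algorithm~\ref{algorithm} divide by $|\SS|$, the number of \emph{distinct} sampled points; these differ whenever some $K_p\ge 2$, and only the $m$-normalization makes $\mathbb{E}[f((\SS,u),w)]=f(\PP,w)$. Finally, the claim that the reweighted class $\mathcal{G}$ has range-space dimension $\BigO(\mathrm{dim}(\mathcal{F}))$ is asserted rather than argued; it is true in the Feldman--Langberg framework (per-point rescaling does not increase the induced dimension), but a complete proof would have to say why. Your derived sample size $\frac{ct}{\eps^2}(\mathrm{dim}(\mathcal{F})\log t+\log\frac{1}{\delta})$ is in fact tighter than the stated bound (which has an extra factor $t$ on the $\log\frac{1}{\delta}$ term), so the stated bound follows a fortiori.
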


\subsection{Sensitivity Upper Bound}
\label{sec:sensitivityUpperBounds}
To be able to derive an $(\eps,\delta)$-FPRAS according to Theorem~\ref{thm:sampling}, we need to be able to efficiently and tightly upperbound the sensitivity $s(p_i)$ of each point. In particular, we start out from the sensitivity description in its most basic form and use multiple insights from SVM. 

\begin{lemma}[Sensitivity Bounds]
\label{lem:sensitivityBounds}
The sensitivity of any arbitrary point $p_i \in \PP$ is bounded above by
\begin{equation}
s(p_i) \leq \gamma(p_i) = \frac{1}{n} + \frac{\log n  + ||x_i||_2 \log^2 n}{n}.
\end{equation}
\end{lemma}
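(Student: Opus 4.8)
\medskip
\noindent\textbf{Proof proposal.}
The plan is to start from the definition of $s(p_i)$ and split the supremum using subadditivity, $\sup_w\big(a(w)+b(w)\big)\le \sup_w a(w)+\sup_w b(w)$. Recalling that $\tilde f(p_i,w)=\|w\|_2^2/n+C\max\{0,1-y_iw^Tx_i\}$ and that $\sum_{j\in[n]}\tilde f(p_j,w)=f(\PP,w)=\|w\|_2^2+C\sum_{j\in[n]}\max\{0,1-y_jw^Tx_j\}$, this gives
\[
s(p_i)\ \le\ \underbrace{\sup_{w\in Q(\PP)}\frac{\|w\|_2^2/n}{f(\PP,w)}}_{(\mathrm I)}\ +\ \underbrace{\sup_{w\in Q(\PP)}\frac{C\max\{0,1-y_iw^Tx_i\}}{f(\PP,w)}}_{(\mathrm{II})},
\]
where $f(\PP,w)>0$ throughout $Q(\PP)$, so every ratio is well defined.

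For term $(\mathrm I)$ I would discard the nonnegative hinge sum in the denominator, i.e.\ use $f(\PP,w)\ge\|w\|_2^2$, so that the ratio in $(\mathrm I)$ is at most $1/n$ whenever $w\ne\bzero$, and equals $0$ when $w=\bzero$; hence $(\mathrm I)\le 1/n$, which recovers the $1/n$ summand of $\gamma(p_i)$.

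For term $(\mathrm{II})$ I bound numerator and denominator separately using the preliminaries. For the numerator, Assumption~\ref{asm:normalizedInput} gives $\|x_i\|_2\le1$ and $w\in\mathcal W$ gives $\|w\|_2\le\log n$, so by Cauchy--Schwarz $|w^Tx_i|\le\|w\|_2\|x_i\|_2\le\|x_i\|_2\log n$ and therefore $\max\{0,1-y_iw^Tx_i\}\le 1+\|x_i\|_2\log n$. For the denominator I drop $\|w\|_2^2\ge0$ and invoke the defining inequality of the query space, Assumption~\ref{asm:boundedQuerySpace}, namely $\sum_{j\in[n]}\max\{0,1-w^Tx_jy_j\}\ge n/\log n$, to get $f(\PP,w)\ge Cn/\log n$. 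Combining these,
\[
(\mathrm{II})\ \le\ \frac{C\big(1+\|x_i\|_2\log n\big)}{Cn/\log n}\ =\ \frac{\log n+\|x_i\|_2\log^2 n}{n},
\]
and adding the two bounds yields $s(p_i)\le\frac1n+\frac{\log n+\|x_i\|_2\log^2 n}{n}=\gamma(p_i)$, as claimed.

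The step I expect to carry all the weight is the lower bound $f(\PP,w)\ge Cn/\log n$ used in $(\mathrm{II})$: it is the only place Assumption~\ref{asm:boundedQuerySpace} enters, and it is indispensable, since on a margin $w$ that almost separates the data and misclassifies essentially only $p_i$ the ratio $\tilde f(p_i,w)/f(\PP,w)$ tends to $1$, which would preclude any sublinear sensitivity bound. Everything else is routine; the only care needed is handling the degenerate small-$\|w\|_2$ regime in the split above, together with the implicit assumption $n\ge 3$ so that $\log n\ge1$ and the stated bound indeed dominates $1/n$ in those corner cases.
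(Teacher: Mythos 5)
Your proposal is correct and follows essentially the same route as the paper: the paper's single step $\frac{a+b}{c+d}\le\frac{a}{c}+\frac{b}{d}$ is exactly your split into terms $(\mathrm I)$ and $(\mathrm{II})$ with the complementary part of the denominator dropped in each, followed by the same Cauchy--Schwarz bound with $\|w\|_2\le\log n$ and the same use of Assumption~\ref{asm:boundedQuerySpace} to lower-bound the hinge sum by $n/\log n$. Your explicit handling of the $w=\bzero$ case and the $n\ge 3$ caveat is slightly more careful than the paper's, but the argument is the same.
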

\begin{proof}
Consider the sensitivity of a particular point $p_i \in \PP$:
\begin{align}
s(p_i) &= \sup_{w \in Q(\PP)} \,\, \frac{\tilde{f}(p_i, w)}{\sum_{j \in [n]} \tilde{f}(p_j, w)} \\
&\leq\sup_{w \in Q(\PP)} \,\, \frac{1}{n} + \frac{\max \{0, 1 - w^T x_i y_i\}}{ \sum_{j \in [n]} \max \{0, 1 - w^T x_j y_j\}} && \left(\frac{a + b}{c + d} \leq \frac{a}{c} + \frac{b}{d}, \forall{a,b,c,d \in \mathbb{R}_+}\right) \\
&\leq\sup_{w \in Q(\PP)} \,\, \frac{1}{n} + \frac{1 + ||w||_2 ||x_i||_2}{ \sum_{j \in [n]} \max \{0, 1 - w^T x_j y_j\}} && \text{(Cauchy-Schwarz Inequality)} \\
&\leq \frac{1}{n} + \frac{1 + ||x_i||_2 \log n}{ (n/\log n)} &&\text{(By Assumption ~\ref{asm:boundedQuerySpace})} \\
&= \gamma(p_i)
\end{align}
\end{proof}

From Lemma~\ref{lem:sensitivityBounds}, we can now derive an upper bound on the total sensitivity $S(\PP)$: 

\begin{corollary}
\label{cor:SumSensitivities}
The sum of sensitivities over all points, $S(\PP) = \sum\limits_{i \in [n]} s(p_i)$, is bounded above by
\begin{align}
S(\PP) &\leq \sum_{i \in [n]} \gamma (p_i) \leq t = 1 + \log n + \log^2 n \\  
&\leq  \BigO(\log^2 n).
\end{align}
\end{corollary}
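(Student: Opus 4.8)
The plan is simply to sum the per-point bound from Lemma~\ref{lem:sensitivityBounds} over all $n$ points and use linearity. Starting from $s(p_i) \le \gamma(p_i) = \frac1n + \frac{\log n + ||x_i||_2 \log^2 n}{n}$, I would write
\begin{align*}
S(\PP) = \sum_{i \in [n]} s(p_i) &\le \sum_{i \in [n]} \gamma(p_i) \\
&= \sum_{i \in [n]} \frac1n + \sum_{i \in [n]} \frac{\log n}{n} + \frac{\log^2 n}{n} \sum_{i \in [n]} ||x_i||_2 ,
\end{align*}
and then bound the three resulting sums separately.

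The first two sums are trivial, since their summands do not depend on $i$: $\sum_{i \in [n]} \frac1n = 1$ and $\sum_{i \in [n]} \frac{\log n}{n} = \log n$. The only term needing an argument is the last one, and this is where Assumption~\ref{asm:normalizedInput} enters: because $||x_i||_2 \le 1$ for every $i \in [n]$, we have $\sum_{i \in [n]} ||x_i||_2 \le n$, so $\frac{\log^2 n}{n} \sum_{i \in [n]} ||x_i||_2 \le \log^2 n$. Adding the three bounds gives $S(\PP) \le 1 + \log n + \log^2 n =: t$, and since $1 \le \log^2 n$ and $\log n \le \log^2 n$ for all sufficiently large $n$, this is $\BigO(\log^2 n)$, as claimed.

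I do not expect a genuine obstacle: the computation is routine once Lemma~\ref{lem:sensitivityBounds} is in hand. The one point worth care is identifying which hypothesis does the work — it is the \emph{normalization} Assumption~\ref{asm:normalizedInput} bounding $||x_i||_2$, not the bounded-query-space Assumption~\ref{asm:boundedQuerySpace}, that keeps $\sum_i ||x_i||_2$ at $\BigO(n)$ and hence the total sensitivity polylogarithmic; without a uniform bound on the $||x_i||_2$ this last sum, and thus $S(\PP)$, could be substantially larger. So the real content is just invoking the already-established per-point bound and citing the correct assumption when the sum is split.
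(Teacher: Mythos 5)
Your proof is correct and follows essentially the same route as the paper: sum the per-point bound from Lemma~\ref{lem:sensitivityBounds} and control the $\sum_i \|x_i\|_2$ term via Assumption~\ref{asm:normalizedInput}. The paper's own proof is just a terser version of the same computation (it does not even explicitly name the normalization assumption, which you rightly identify as the hypothesis doing the work).
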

\begin{proof}
Leveraging the inequality established by Lemma~\ref{lem:sensitivityBounds}, we have the following upper bound for the sum of sensitivities, i.e., for $S(\PP)$:
\begin{align*}
S(\PP) &\leq \sum_{i \in [n]} \gamma(p_i) \\
&= \sum_{i \in [n]} \left(\frac{1}{n} + \frac{\log n  + ||x_i||_2 \log^2 n}{n} \right) \\
&\leq 1 + \log n + \log^2 n \\
&= \BigO(\log^2 n)
\end{align*}
\end{proof}

Finally, under the combination of the results above, we present the derivation of the $(\eps, \delta)$-FPRAS.

\begin{theorem}[$(\epsilon,\delta)$-Coreset FPRAS]
\label{thm:epsilonCoreset}
Given any $\epsilon, \delta \in (0,1)$ and a data set $\PP$, Algorithm~\ref{algorithm} generates an $\epsilon$-coreset, $(S, u)$, of size 
$$
|\SS| = \BigO\left(\log n /\epsilon^2 (d\log \log n + \log^2 n \log (1/ \delta)\right) = \BigO_{\epsilon,\delta}\left(\log n (d\log \log n + \log^2 n) \right),
$$
for the L2-norm SVM problem with probability at least 1 - $\delta$. Moreover, our algorithm runs in $\BigO(nd)$ time.
\end{theorem}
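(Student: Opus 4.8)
The plan is to obtain the theorem by feeding the per-point bound of Lemma~\ref{lem:sensitivityBounds} and the total-sensitivity bound of Corollary~\ref{cor:SumSensitivities} into the importance-sampling guarantee of Theorem~\ref{thm:sampling}; the only ingredient not yet in hand is a bound on the VC dimension $\mathrm{dim}(\mathcal{F})$ of the query space $\mathcal{F} = (\PP, Q(\PP), f)$, which I would establish first.

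To bound $\mathrm{dim}(\mathcal{F})$, recall that the sampling framework behind Theorem~\ref{thm:sampling} charges complexity to the range space whose ranges are the level sets $\{p \in \PP : \tilde f(p,w) \ge r\}$ over $w \in \mathcal{W}$ and $r \ge 0$. Since $\tilde f(p,w) = \|w\|_2^2/n + C\max\{0, 1 - y\, w^\top x\}$, the term $\|w\|_2^2/n$ is a constant offset for fixed $w$, so each such set is empty, all of $\PP$, or of the form $\{(x,y) : y\, w^\top x \le \theta\}$ for a threshold $\theta = \theta(w,r)$ --- i.e.\ a halfspace in the space of signed feature vectors $y\cdot x \in \Reals^d$. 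Halfspaces in $\Reals^d$ have VC dimension $d+1$, so $\mathrm{dim}(\mathcal{F}) = \BigO(d)$; one must also verify that the hinge ``kink'' at $y\, w^\top x = 1$ and the quadratic offset introduce no extra degrees of freedom, which they do not, since neither depends on the point $p$.

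Next I would assemble the size bound. By Lemma~\ref{lem:sensitivityBounds}, $\gamma(p_i) = \tfrac{1}{n} + \tfrac{\log n + \|x_i\|_2\log^2 n}{n} \ge s(p_i)$ for every $i$, and by Corollary~\ref{cor:SumSensitivities} the total is $t = \sum_i \gamma(p_i) \le 1 + \log n + \log^2 n = \BigO(\log^2 n)$. Substituting this $\gamma$, this $t$, and $\mathrm{dim}(\mathcal{F}) = \BigO(d)$ into Theorem~\ref{thm:sampling} shows that a random sample $\SS$ of size $|\SS| = \Omega\!\big(\tfrac{t}{\epsilon^2}(\mathrm{dim}(\mathcal{F})\log t + t\log\tfrac1\delta)\big)$, equipped with the weights $u(p) = K_p\, t / (\gamma(p)\,|\SS|)$ (where $K_p$ is the multiplicity of $p$ in the sample), is an $\epsilon$-coreset with probability at least $1-\delta$; simplifying the logarithmic factors and absorbing the $\epsilon,\delta$ dependence yields $|\SS| = \BigO\!\big(\log n/\epsilon^2\,(d\log\log n + \log^2 n\log(1/\delta))\big) = \BigO_{\epsilon,\delta}(\log n\,(d\log\log n + \log^2 n))$.

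Finally, for the running time: computing $\|x_i\|_2$ and hence $\gamma(p_i)$ for all $n$ points costs $\BigO(nd)$, and forming $t$ by summation is $\BigO(n)$; drawing the $|\SS|$ samples from the distribution $p_i \mapsto \gamma(p_i)/t$ takes $\BigO(n)$ preprocessing plus $\BigO(|\SS|)$ query time (e.g.\ via the alias method), and assigning the weights is $\BigO(|\SS|)$; since $|\SS| = \BigO_{\epsilon,\delta}(\mathrm{polylog}\, n)$ is dominated by $nd$, the total is $\BigO(nd)$. The main obstacle is the VC-dimension step --- cleanly reducing the loss-induced range space to halfspaces in $\Reals^d$ and checking that the hinge nonlinearity and the $\|w\|_2^2$ term add no dimension; the rest is substitution into Theorem~\ref{thm:sampling} and a routine complexity count.
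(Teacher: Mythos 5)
Your proposal is correct and follows essentially the same route as the paper: bound $\mathrm{dim}(\mathcal{F})$ by $\BigO(d)$ via the halfspace/separating-hyperplane VC dimension (which the paper handles by citing Vapnik while you derive it from the level sets of $\tilde f$), substitute $t = \BigO(\log^2 n)$ from Corollary~\ref{cor:SumSensitivities} into Theorem~\ref{thm:sampling}, and observe that the $\BigO(nd)$ runtime is dominated by computing the sensitivity upper bounds. Your treatment of the VC-dimension step and the sampling/weighting costs is somewhat more explicit than the paper's, but it is the same argument.
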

\begin{proof}
First, we leverage the seminal result by Vapnik et al. \cite{vapnik1998statistical} stating that the VC dimension of a separating hyperplane with a margin $w$, i.e., $\text{dim}(\mathcal{F})$, is bounded above by
$$
\text{dim}(\mathcal{F}) \leq d + 1 = \BigO(d).
$$
Now, by Theorem~\ref{thm:sampling}, we have that the coreset constructed by our algorithm is an $\epsilon$-coreset for the SVM problem with probability at least $1 - \delta$, and the size of the coreset is established by plugging in the bound for the sum of sensitivities from Corollary~\ref{cor:SumSensitivities} to Equation \eqref{eqn:samplesNeeded}:
$$
|\SS| \ge \frac{ct}{\epsilon^2}\left(d\log(t) + t\log(\frac{1}{\delta})\right).
$$
Moreover, note that the computation time of our algorithm is dominated by computing the upper bounds on the sensitivity, i.e., Line~\ref{E2} of our Algorithm, which is in turn a $\mathcal{O}(d)$ time operation per point, yielding a total running time of $\mathcal{O}(nd)$.
\end{proof}

Thus, our coresets are of size polylogarithmic in the number of points $n$ and polynomial in the dimension of the points $d$. Note that since $d \ll n$ in the applications we are considering, the theorem above proves that the coreset generated by our approximation scheme is capable of generating an $1 \pm \epsilon$ approximation to the SVM problem with probability $1-\delta$, even when the coreset size is significantly smaller than the size of the original input points $n$.

\section{Method}
\label{sec:method}
\begin{algorithm}[!htb]
\label{algorithm}
\caption{$\coreset(\PP,\eps,\delta)$\label{one}}
{\begin{tabbing}
\textbf{Input:} \quad\quad\= A set of training points $\PP \subseteq \REAL^d$ containing $n$ points, \\
\>an error parameter $\eps\in(0,1)$, and failure probability $\delta\in(0,1)$.\\
\textbf{Output:} \>An $\epsilon$-coreset $(\SS,u)$ for the query space $\mathcal{F}$ with probability at least $1-\delta$.
\end{tabbing}}

\label{E1} \For{$i \in [n]$}   
{
$\gamma(p_i) \gets \frac{1}{n} + \frac{\log n  + ||x_i||_2 \log^2 n}{n}$ \label{E2} \\ 
}
$t \gets \sum_{i \in [n]} \gamma(p_i)$ \label{E3} \\
Let
\[
m \gets \Omega \left(\frac{t}{\eps^2}\left(d\log t+\log\left(\frac{1}{\delta}\right)\right) \right),
\]\\ \label{E4} 
$(K_1,\ldots,K_n) \sim \text{Multinomial}\left(m, \pi_i = \gamma(p_i)/t \, \, \, \forall{i \in [n]}\right)$  \label{E5} \\
$\SS \gets \left\{ p_i \in \PP \, : \, K_i > 0 \right\}$ \\ \label{E9}
//\textit{Compute the weights $u:\PP \to \mathbb{R}_{\geq 0}$ for every point $p_i \in \SS$.} \\ \label{E6} 
\For{\label{E7} $i \in [n]$}   
{
	$u(p_i) \gets \frac{t  K_i}{\gamma(p_i) |\SS|}$ \\ \label{E8}
}
\Return $(\SS,u)$  \label{E10}
\end{algorithm}

In this section, we will give an overview of the algorithm to compute the coreset $\SS$, which can then be used to train the SVM classifier. We will highlight the important insights of the method, before explaining each step closely.

Our algorithm is based off the idea that for any given dataset $\PP$, we assign an importance $\gamma(p_i)$ to each data point $p_i$ and then sample from the dataset according to the multinomial distribution emerging from this procedure. The crucial insight to this method is how we assign the importances $\gamma(p_i)$. In particular, we use an overapproximation of the sensitivity $s(p_i)$ of each point, i.e., $\gamma(p_i)$, to assign importances, which are obtained from the analysis from the previous section. Following the sampling of points, we further assign weights $u(p_i)$ to each data points, which are proportional to the number of times the point has been sampled. We then train a SVM classifier on the weighted coreset $(\SS,u)$ using any standard SVM library. The resulting algorithm is an $(\eps,\delta)$-FPRAS for approximating the trained classifier.

The overall method to compute the desired coreset is outlined in Algorithm~\ref{algorithm}. Given a set of input data $\PP$, an error parameter $\eps$, and the desired failure probability $\delta$, the algorithm returns an $\eps$-coreset $(\SS,u)$ from the query space $\mathcal{F}$ with probability at least $1-\delta$. In Line~\ref{E2} we compute the importance of a point, i.e., the upper bound on the sensitivity $s(p_i)$ of a point $p_i$, see Lemma~\ref{lem:sensitivityBounds} for more details. In Line~\ref{E4}, we compute the necessary number of samples to include in $(\SS,u)$, according to Theorem~\ref{thm:sampling}, and we then sample from the resulting multinomial distribution, see Line~\ref{E5}. Note that samples in Line~\ref{E8} are weighted according to Theorem~\ref{thm:sampling}.

\section{Results}
\label{sec:results}

We evaluate the performance of our coreset construction algorithm against two real-world, publicly available data sets \cite{Lichman:2013} and compare its effectiveness to Core Vector Machines (CVMs), another approximation algorithm for SVM \cite{tsang2005core}, and categorical uniform sampling, i.e., we sample uniformly a number of data points from each label $y_i$. In particular, we selected a set of $M = 10$ subsample sizes $S_1,\ldots,S_\text{M} \subset [n]$ for each data set of size $n$ and ran all of the three aforementioned coreset construction algorithms to construct and evaluate the accuracy of subsamples sizes $S_1,\ldots,S_\text{M}$. The results were averaged across $100$ trials for each subsample size. Our experiments were implemented in Python and performed on a 3.2GHz i7-6900K (8 cores total) machine.


\subsection{Credit Card Dataset}
The Credit Card dataset\footnote{\url{https://archive.ics.uci.edu/ml/datasets/default+of+credit+card+clients}} contains $n = 30,000$ entries each consisting of $d = 24$ features of a client, such as education, and age. The goal of the classification task is to predict whether a client will default on a payment. 

Figure~\ref{fig:CreditCard} depicts the accuracy of the subsamples generated by each algorithm and the computation time required to construct the subsamples. Our results show that for a given subsample size, our coreset construction algorithm runs more efficiently and achieves significantly higher approximation accuracy with respect to the SVM objective function in comparison to uniform sampling and CVM. 
Note that the relative error is still relatively high since the dataset is not very large and benefits of coresets become even more visible for significantly larger datasets.

As can be seen in Fig.~\ref{fig:CreditCardSensitivity} our coreset sampling process noticeably differs from uniform sampling and some data points are sampled with much higher probability than others. This is in line with the idea of sampling more important points with higher probability.

\begin{figure*}[!htb]
  \centering
  \begin{minipage}{0.49\textwidth}
  \centering
  \includegraphics[width=1\textwidth]{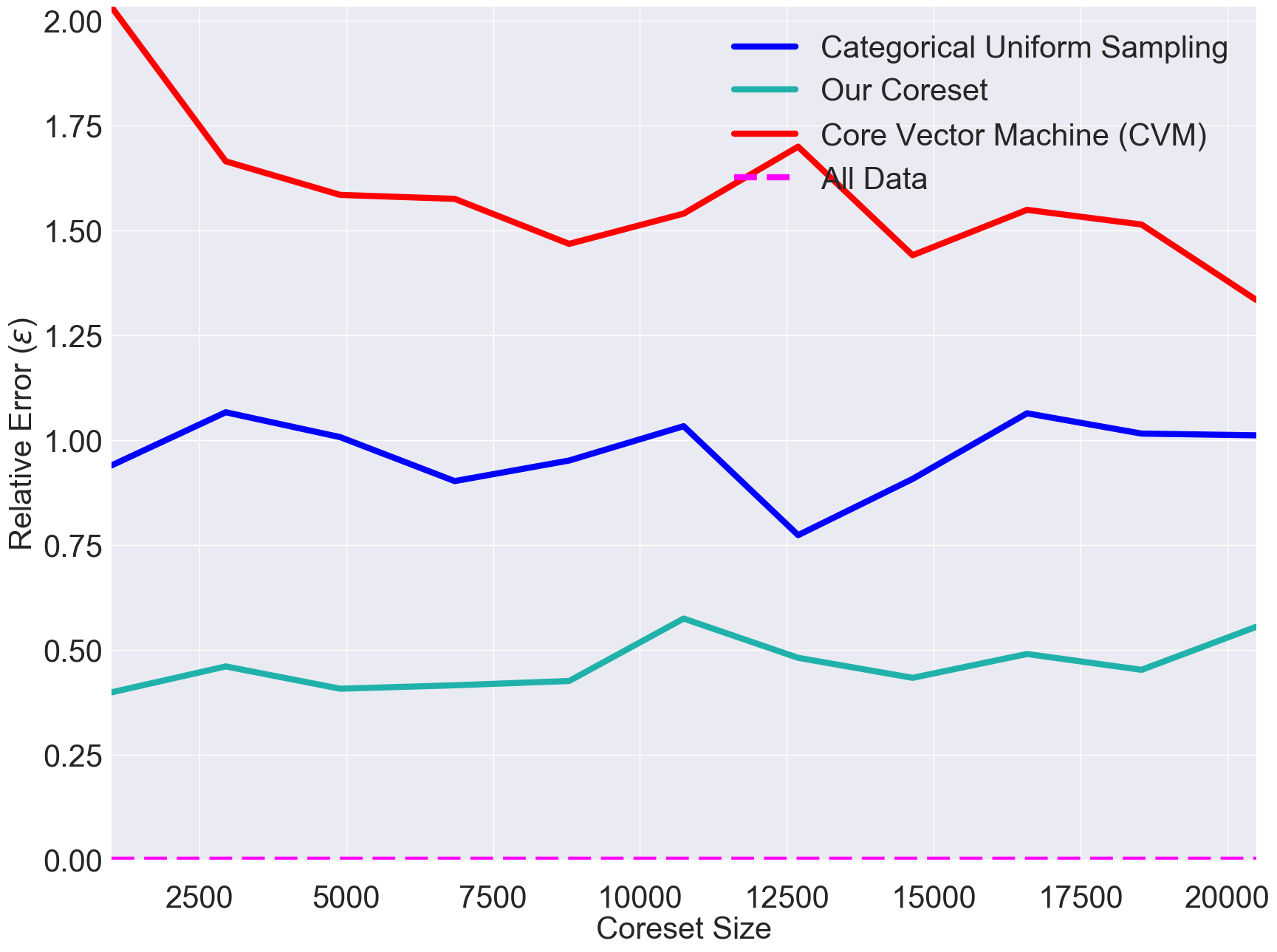} 
  a)
  \end{minipage}
  \begin{minipage}{0.49\textwidth}
  \centering
 \includegraphics[width=1\textwidth]{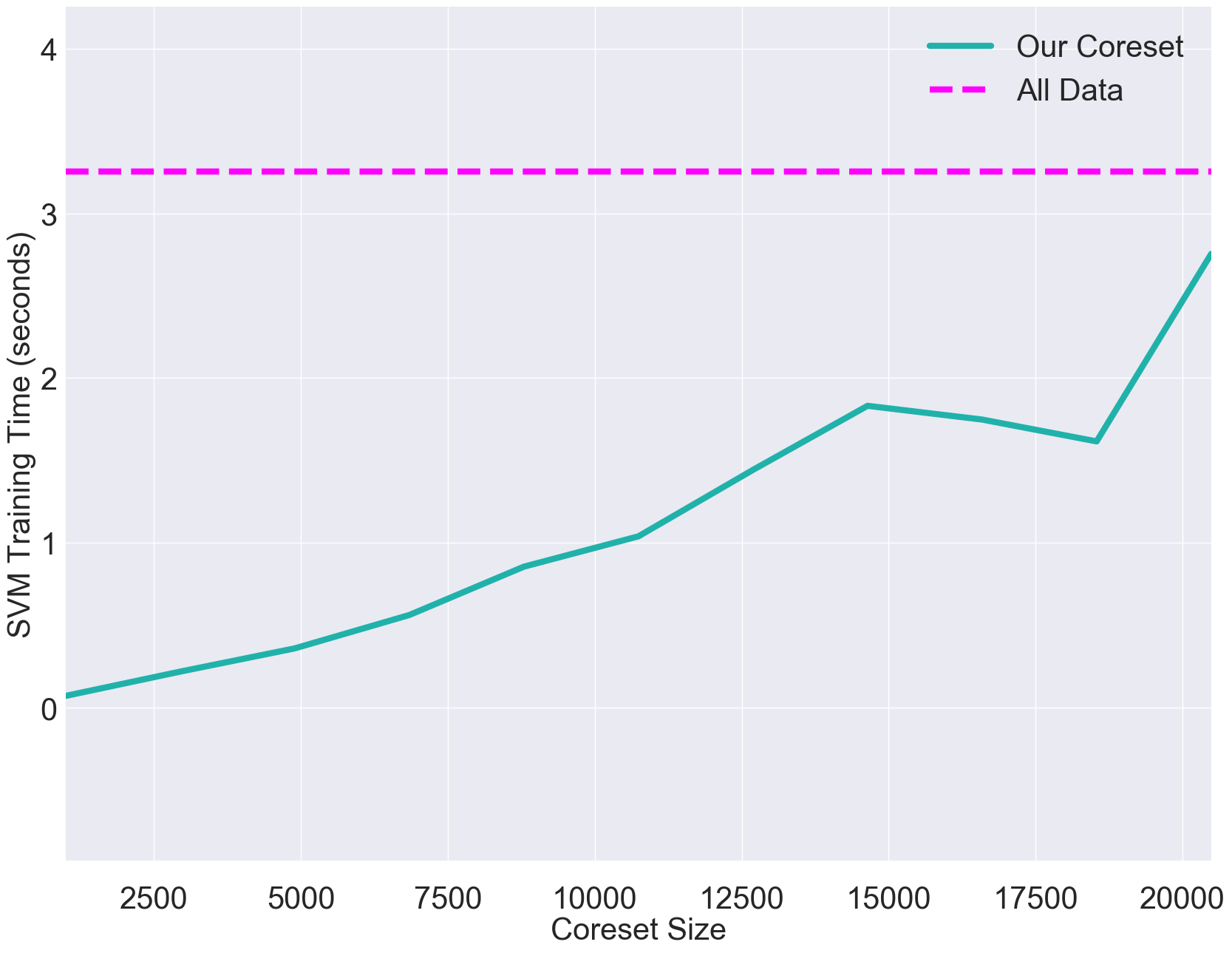}
 b)
  \end{minipage}
  
		\caption{a) Accuracy of each algorithm with respect to the SVM objective function \eqref{eqn:svm}. b) Computation time required to train the SVM using subsamples relative to using all of the input data.}
	\label{fig:CreditCard}
\end{figure*}

\begin{figure*}[!htb]
  \centering
  \begin{minipage}{0.48\textwidth}
  \centering
  \includegraphics[width=1\textwidth]{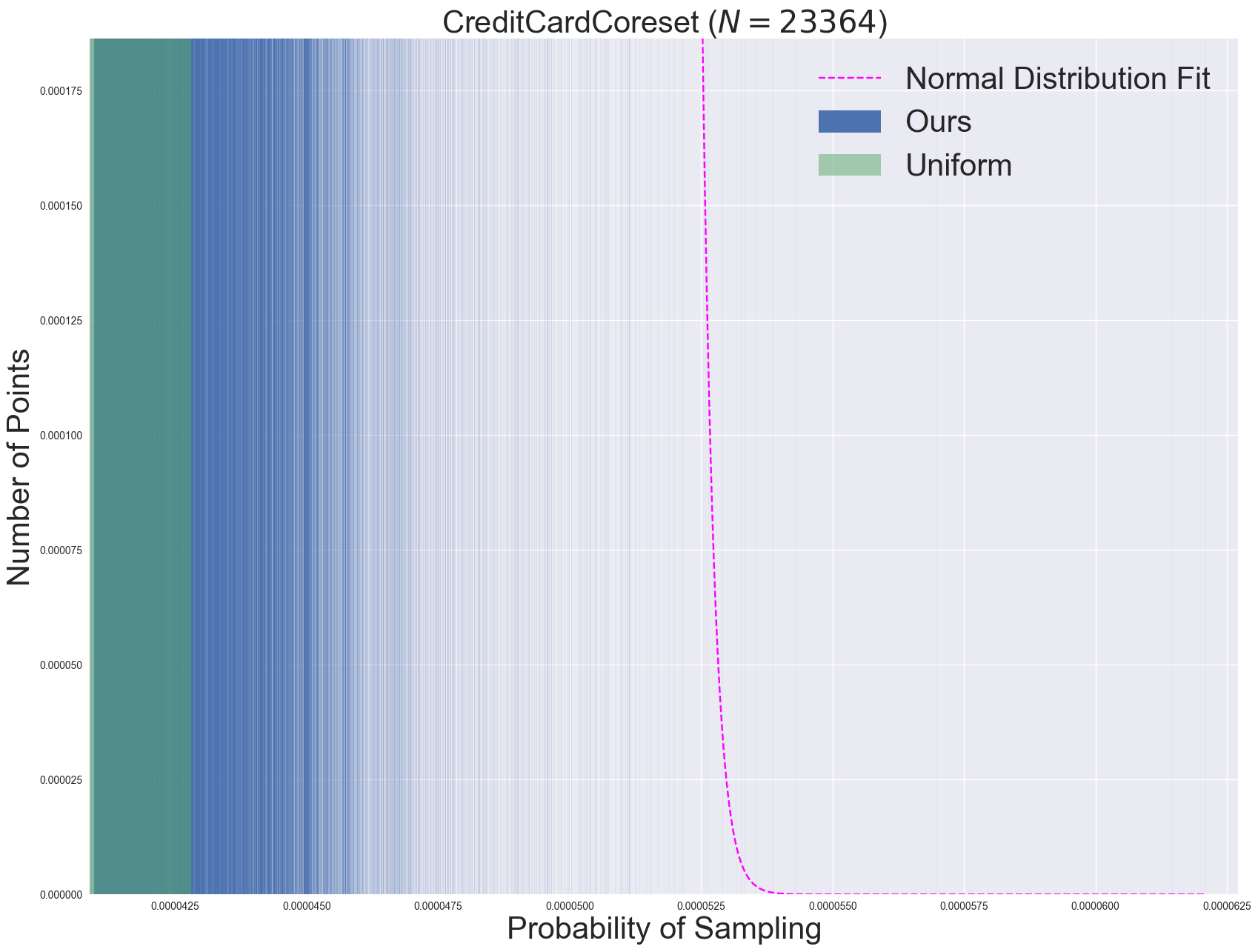}
  a)
  \end{minipage}
  \begin{minipage}{0.48\textwidth}
  \centering
 \includegraphics[width=1\textwidth]{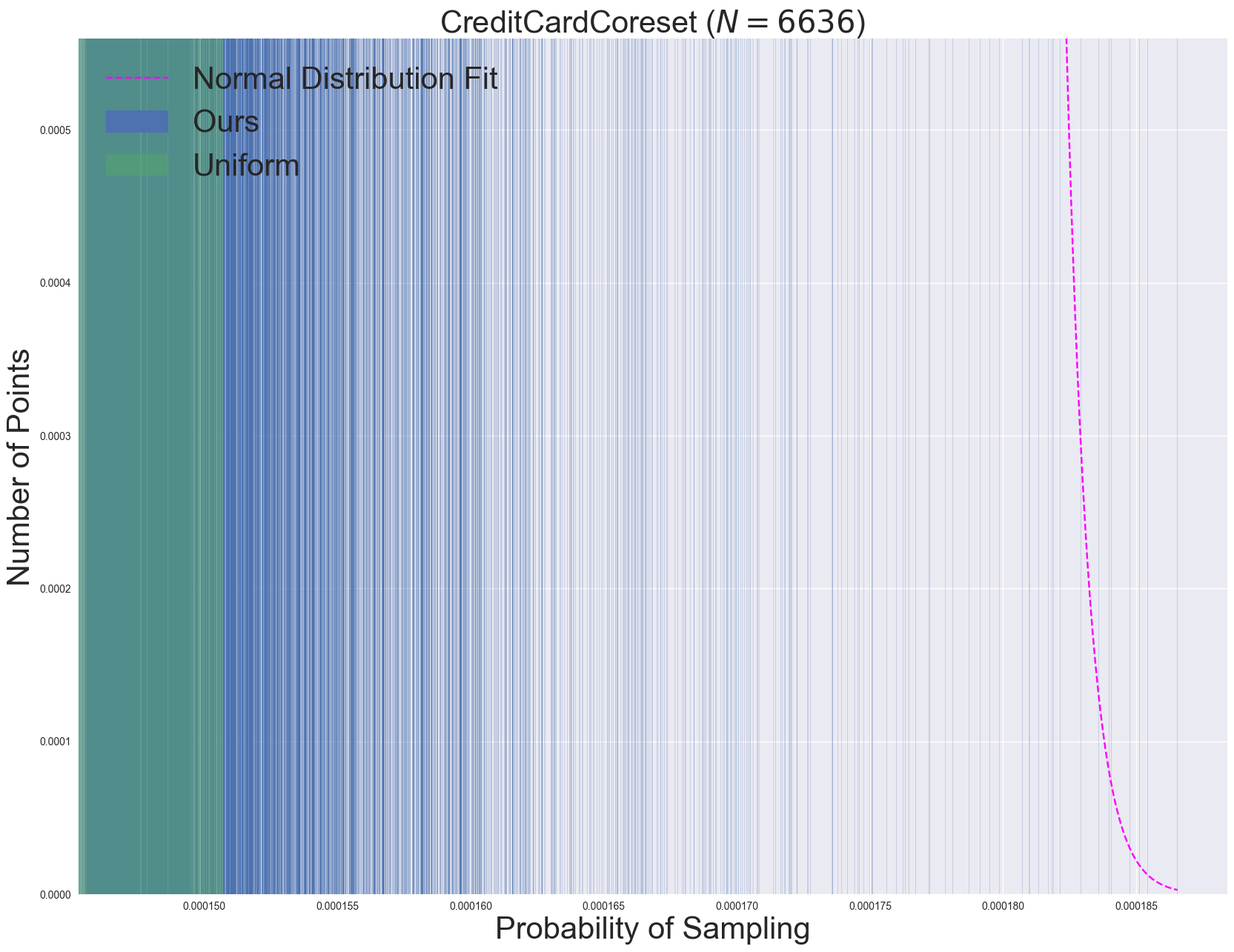}
 b)
  \end{minipage}
  
		\caption{Sampling distribution of points based on the computed upper bounds on sensitivity.}
	\label{fig:CreditCardSensitivity}
\end{figure*}

\subsection{Skin Dataset}
The Skin dataset\footnote{\url{https://archive.ics.uci.edu/ml/datasets/Skin+Segmentation/}} consists of $n = 245,057$ data points with $d = 4$ attributes per point. The attributes include random samples of B,G,R values from face images and the goal of the classification task is to determine whether these samples are skin or non-skin samples. 

Our coreset outperforms uniform sampling for all coreset sizes (cf.~Fig.~\ref{fig:SkinData}), while computation time of coreset generation and SVM training remains a fraction of the original dataset. Due to poor performance CVM is omitted from the showed results. Note that this dataset is significantly larger than the Credit Card dataset and thus the advantages in error and most significantly runtime are more prominent.

\begin{figure*}[!htb]
  \centering
  \begin{minipage}{0.49\textwidth}
  \centering
  \includegraphics[width=1\textwidth]{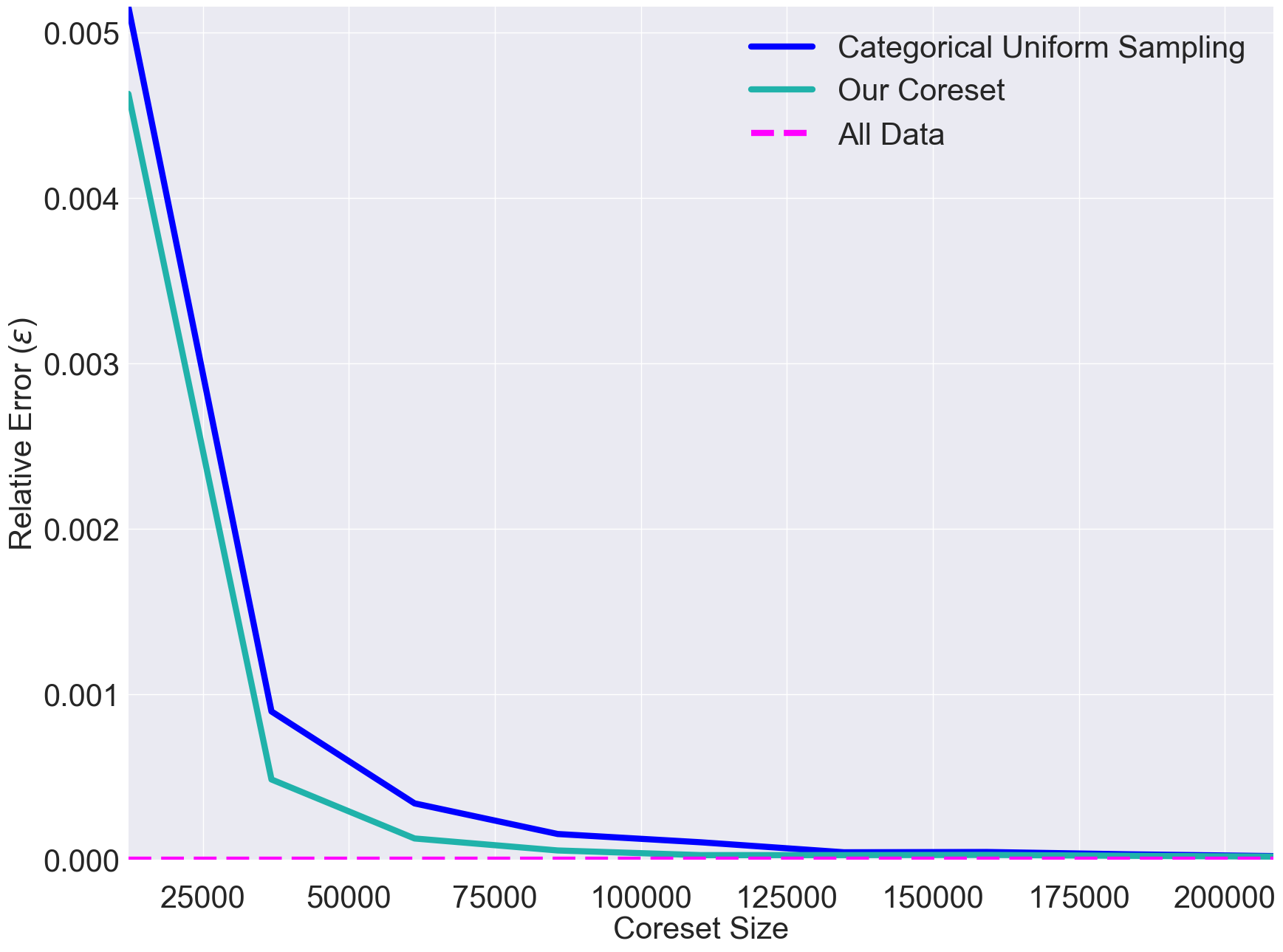} 
  a)
  \end{minipage}
  \begin{minipage}{0.49\textwidth}
  \centering
 \includegraphics[width=1\textwidth]{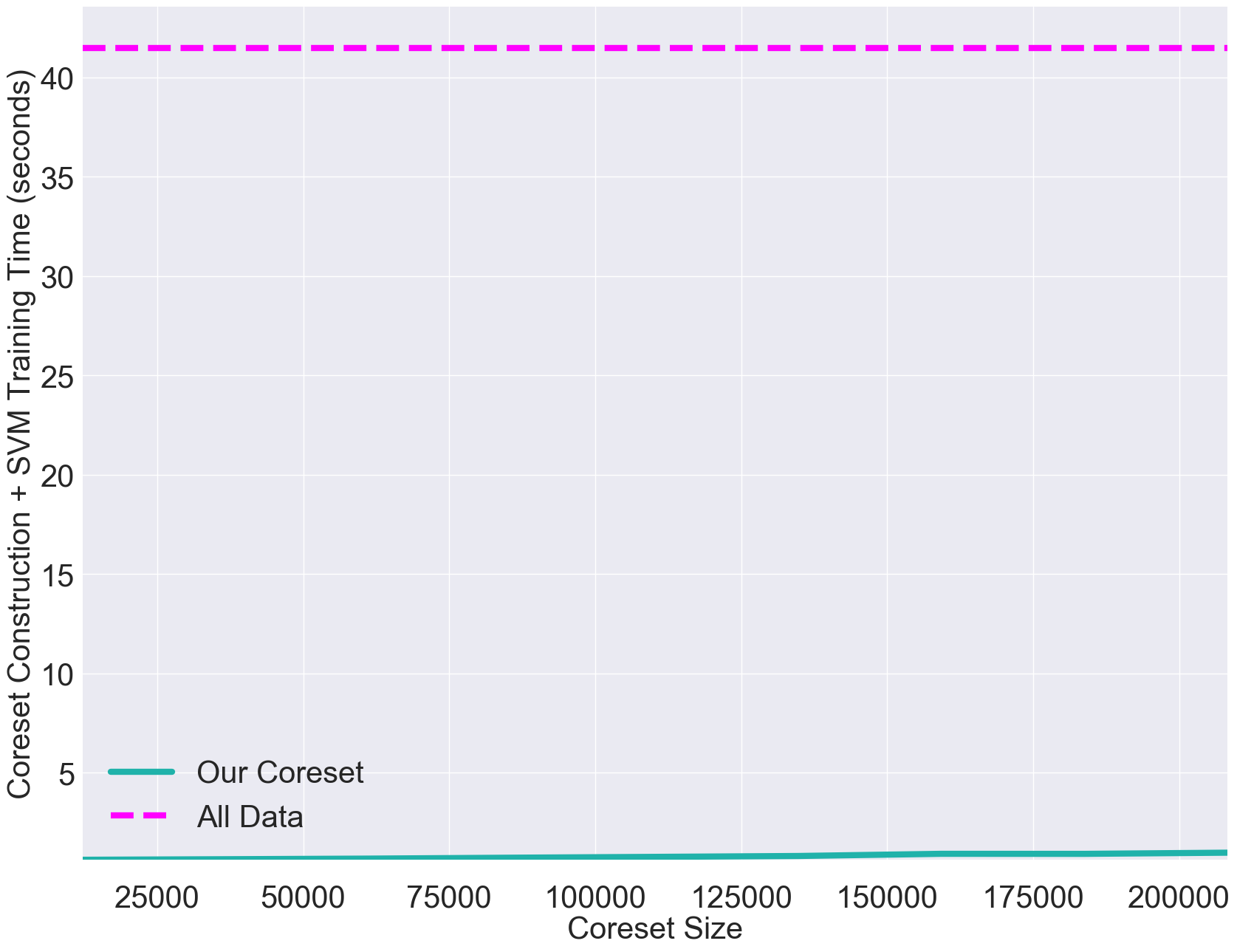}
 b)
  \end{minipage}
  
		\caption{a) Accuracy of each algorithm with respect to the SVM objective function \eqref{eqn:svm}. b) Computation time required to train the SVM using subsamples relative to using all of the input data. CVM is omitted due to poor performance.}
	\label{fig:SkinData}
\end{figure*}

\section{Conclusion}
\label{sec:conclusion}
We presented an efficient coreset algorithm for obtaining substantial speedups in SVM training at the cost of small, provably-bounded approximation error. Our approach relies on the intuitive fact that data is often redundant and that some data points are more important than others. We showed that by obtaining tight bounds on the importance, i.e. sensitivity, of each point, coresets of size polylogarithmic in the number of points and polynomial in the dimension of the points can be efficiently constructed. To the best of our knowledge, this paper presents the first method for constructing coresets of this size that is also applicable to streaming settings, using the merge-and-reduce approach familiar to coresets \cite{braverman2016new}. Our favorable empirical results demonstrate the effectiveness of our algorithm in accelerating the training time of SVMs in real-world data sets. We conjecture that our coreset construction method can be extended to significantly speed up SVM training for nonlinear kernels as well as other popular machine learning algorithms, such as deep learning.


\bibliographystyle{plain}
\bibliography{refs}

\end{document}